\newtheorem{theorem}{Theorem}
\newtheorem{corollary}[theorem]{Corollary}
\newtheorem{proposition}[theorem]{Proposition} 
\newtheorem{definition}{Definition}
\title{\LARGE \bf
Cooperative Energy Scheduling for Microgrids \\ under Peak Demand Energy Plans}
\author{Amir Valibeygi and Raymond A. de Callafon
\thanks{Amir Valibeygi is with the Department of Mechanical and Aerospace Engineering,
        University of California, San Diego
        {\tt\small avalibey@ucsd.edu}}%
\thanks{Raymond A. de Callafon is a professor with the Department of Mechanical and Aerospace Engineering,
        University of California, San Diego, La Jolla, CA 92094-0411, USA
        {\tt\small callafon@ucsd.edu}}%
}
\begin{document}

\maketitle
\thispagestyle{empty}
\pagestyle{empty}


\begin{abstract}
A cooperative energy scheduling method is proposed that allows joint energy optimization for a group of microgrids to achieve cost savings that the microgrids could not achieve individually. The discussed microgrids may be commercial entities in a distribution network under utility electricity rate plans comprising both Time of Use (ToU) and peak demand charge. Defining a stable operation as a situation where all microgrids would be willing to participate, it is shown that under such rate plans and in particular due to the peak demand charge, a cost distribution that is seemingly fair does not necessarily result in a stable cooperation. These results are derived in this paper using  concepts from cooperative games. It is therefore sought to devise a stable cost distribution algorithm that, while maximizing some measure of fairness among the participating microgrids, ensures they all benefit from their participation. A simple case study is presented that demonstrates fairness and stability aspects of the cooperation.
\end{abstract}


\section{Introduction}

For many small and medium-size commercial and industrial entities, electric utility charges involve two main components, one accounting for the entity's weighted hourly consumption (Time of Use or ToU) and the other for its peak demand during the entire billing interval (demand charge). ToU pricing imposes different unit prices at different times of the day to urge the users to shift their loads from certain peak hours to off-peak hours. On the other hand, peak demand charge urges the users to flatten their overall demand profile to realize smaller peak to average ratio.

Currently in California, peak demand charges account for up to 50\% of some industrial users’ monthly electricity costs \cite{neubauer2015deployment}. Users’ response to such pricing is dependent on their flexibility for time-shifting their loads as well as their energy storage capacity. While energy storage can be exploited by users to optimally schedule their consumption and reduce their costs, it can also benefit the grid by increasing its reliability under periods of high demand \cite{denholm2018potential}. Optimal scheduling of energy storage is a major topic of interest for electricity users of different sizes. For some major contributions see \cite{chen2012optimal,chandy2010simple}. While each user with energy storage can optimize its storage schedule, joint energy scheduling and optimization between multiple microgrids may further reduce their total cost. This can be roughly attributed to the fact that by joint scheduling, users who under-utilize their energy storage at some or all parts of the day can make it available to other users who need it at those times. The role of cooperative optimization in reducing intermittency and uncertainty has also fueled the interest in cooperative optimization methods \cite{rahbar2016energy}. Different works have studied cooperative management of loads and storage and joint optimization between multiple users
\cite{mohsenian2010autonomous,ananduta2018resilient}. In \cite{rahbar2016energy}, the authors consider the problem of cooperative energy management for two microgrids with ToU energy cost aiming to achieve total cost reduction as a result of cooperation. Shared ES management for a group of microgrids with profit coefficient sets is considered in \cite{rahbar2016shared}. The authors of \cite{nguyen2013optimal} propose a stochastic formulation of the cooperative energy scheduling problem for a group of microgrids and consider ToU as well as operating costs of their local energy resources. The work of \cite{ouammi2015coordinated} proposes a model predictive framework for cooperative optimization of a network of interconnected microgrids and discusses the attained total cost saving for all users.

While the approaches proposed in the earlier works present feasible strategies for total cost saving by a group of users under cooperation, they do not discuss distribution of the obtained total savings between the participating users. It is desired to investigate mechanisms that make such exchange profitable for all participating parties so as to encourage participation. The work of \cite{dehghanpour2017agent} discusses cooperative power management between multiple microgrids with renewable generation and proposes a pareto-optimal solution using Nash bargaining that encourages microgrids to participate. In \cite{baharlouei2013achieving}, the authors discuss fairness in optimal coordination of multiple users with quadratic energy cost. Also, the work \cite{nguyen2017bi} studies cooperative management in the wholesale electricity market and presents cost allocation solutions that have certain favorable properties. In \cite{lee2014direct}, the authors have used a cooperative game approach to tackle direct energy trading between DERs and energy consumers. In general, division of the attained cost saving between participating users in a fair and stable manner constitutes the major issue in the considered cooperative games \cite{saad2012game}.

In this work, we are interested in a specific yet prevailing case of cooperative energy scheduling; the case of multiple microgrids under ToU and demand charge energy plans. Cost distribution when the cost structure includes peak demand terms presents important stability implications that have not yet been addressed to the best knowledge of the authors. We will show that first, under such cost structure, joint scheduling will result in reduced total cost. Next, we will investigate how some cost distributions may de-stabilize the collaboration due to the existence of peak demand term in the cost function. An alternative algorithm is then proposed that can provably guarantee the benefit of all users from participation in the collaboration while maximizing some measure of fairness among them. Our main contribution lies in the consideration of demand charge in the cost structure which leads to non-submodularity of the cost function and therefore necessitates careful consideration of stability. We will show that for this problem, a stable distribution of the optimal cost between the users that is desirable from all users’ standpoints will always exists and will provide an approach for computing such distribution.

\section{Preliminaries and Benchmark Problem}
Consider a distribution system with a single energy provider (utility) and a set of microgrids $\mathcal{N}=\{1,2,...,N\}$ where each microgrid $n\in \mathcal{N}$ is equipped with a smart meter and means of two-way data communications with the utility and other users. The terms \textit{user} and \textit{microgrid} are used interchangeably throughout the paper. Some users may additionally possess a battery energy storage (ES) system. The set $C=\{C_1,C_2,...,C_N\}$ indicates the energy storage capacity of each user. We further take the set of time intervals $\mathcal{T}=\{1,2,...,T\}$ where each $t\in \mathcal{T}$ has length $\Delta_T$ to represent the energy scheduling horizon. The energy demand vector for each user $n\in \mathcal{N}$ is defined as $d_n\triangleq \begin{bmatrix}
d_n^1 , d_n^2 , ... , d_n^T
\end{bmatrix}$ while energy flow from the grid to user $n$ is $x_n\triangleq \begin{bmatrix}
x_n^1 , x_n^2 , ... , x_n^T
\end{bmatrix}$. The sum of energy supplied to each user from the main grid and from the user's ES during interval $t \in \mathcal{T}$ should equal $d_n^t$. The energy provided by the ES is denoted by $e_n^t$. 
Each user $n\in \mathcal{N}$ with storage capability can supply part or all of its demand according to
\begin{align}
d_n^t=x_n^t+e_n^t
\label{eq:energy_balance}
\end{align}
and its storage charge level $c_n$ varies as
\begin{align}
c_n^{t+1}=c_n^t-e_n^t
\label{eq:storage_charge}
\end{align}
Such user can make dispatch decisions for storage charge/discharge ($e_n^t$) to optimize its electricity consumption subject to the following storage constraints
\begin{align}\nonumber
&e_n^{min}\leq e_n^t \leq e_n^{max}\\
&c_n^{min}\leq c_n^t \leq c_n^{max}\\ \nonumber
&c_n^{t_0} = c_n^{t_{end}}
\end{align}
While this simplified model is adopted to only capture salient dynamics of the system and demonstrate the pivotal ideas of this work, the methods are mostly applicable under more complex microgrid models.\\
The energy consumers in this study are commercial and industrial users that are billed under both ToU and demand charge pricing plans.
The total cost of energy for user $n$ under such rate plans can be computed as
\begin{align}
f_n(x_n)=\sum_{t=1}^{T}p^t x_n^t + \alpha~\underset{t \in \mathcal{T}}{\max}~x_n^t
\label{eq:cost_p1}
\end{align}
where $p^t$ is the ToU unit price at time $t$ and $\alpha$ is the demand charge coefficient.


\textit{\textbf{Optimization 1.}} The cost minimization problem for each individual user having ES can be formulated as
\begin{align}
&\underset{x_n}{\mathrm{minimize}}~f_n(x_n) \label{eq:problem_1} \\
&\nonumber \mathrm{subject~to~} (1-3)
\end{align}
The optimal solution $x_n^*$ to this problem will result in the optimal cost $f_n(x_n^*)$. The above problem has a convex objective function and affine constraints and therefore is a convex program \cite{boyd2004convex}.  If each user solves its energy optimization (\textit{optimization 1}) individually without cooperation with other users, the total cost of all users would become
\begin{align}\nonumber
f_{non-coop}= \sum_{n=1}^{N}{f_n(x_n^*)}=\sum_{n=1}^{N}{\Big[\sum_{t=1}^{T}p^t {x_n^t}^* + \alpha~\underset{t \in \mathcal{T}}{\max}~{x_n^t}^*\Big]
}
\end{align}
Next, we consider how a few users may cooperate to reduce this overall cost.


\section{Cooperative Optimization}
\subsection{Motivation}

Medium and large-scale energy consumers have diverse demand patterns and often significantly high peak-to-average demand ratios. While demand charge is a major part of total energy cost for such consumers, substantial demand charges may be incurred due to uneven demand distribution and high peak-to-average ratio \cite{neubauer2015deployment}. The addition of storage for reducing peak demand could significantly reduce demand charge in such cases. 
Although individual microgrids can utilize their ES to optimize their demand as described in the previous section, joint storage utilization between multiple users in a distribution network can potentially bring additional benefits. This can be attributed to the benefits gained by shared storage utilization as well as the fact that sum of users peaks is always greater than or equal to the peak of the collective consumption of all users. We propose a method of joint optimization between multiple microgrids with identical cost structures, such that the group of collaborating microgrids, although physically remote, will purchase electricity from the main grid as a whole. This can be further conceptualized by considering an aggregator that manages interaction and cooperation of participants and represents them as a whole.


The main questions that we are after answering in this section are the following:

1. Does forming a cooperation between the microgrids under the described cost structure result in attaining a lower total cost?

2. How should the savings attained as a result of cooperation be distributed between the participants?

We devote the rest of this section to answering these questions.

\subsection{Centralized Cooperative Optimization}

Consider a group of microgrids some of which having energy storage and suppose that energy flow between the main grid and the group is measured at a virtual point of connection. Denoting the power flow at this point by $x$, the total cost incurred by such group of users is
\begin{align}
f_{coop}(x)=\sum_{t=1}^{T}p^t x^t + \alpha~\underset{t \in \mathcal{T}}{\max}~x^t
\label{eq:cost_p2}
\end{align}

\textit{\textbf{Optimization 2.}} The cooperative optimization problem can be formulated as
\begin{align}
&\underset{x}{\mathrm{minimize}}~f_{coop}(x)
\label{eq:problem_2}
\end{align}
subject to the constraints
\begin{align}\nonumber
&x^t=\sum_{n=1}^{N}{d_n^t}-\sum_{n=1}^{N}{e_n^t}\\
&c_n^{t+1}=c_n^t-e_n^t~~\forall n \in \mathcal{N} \label{eq:constraints_p2}
 \\ \nonumber
&e_n^{min}\leq e_n^t \leq e_n^{max}~~\forall n \in \mathcal{N} \\ \nonumber
&c_n^{min}\leq c_n^t \leq c_n^{max}~~\forall n \in \mathcal{N}
\end{align}
The optimal cost resulting from this optimization is denoted by $f_{coop}^*$. Similar to \textit{optimization 1}, this problem is a convex program and can be solved using standard techniques \cite{boyd2004convex}.

\subsection{Cooperative Game Framework for User Collaboration}
We use concepts from cooperative games to study how different users may form coalitions to attain payoffs that would not be possible if they were to optimize their demand individually and how this increased payoff should be divided between them such that it satisfies some measures of fairness and stability. In this section, we introduce a few concepts from cooperative games including the notions of fairness and stability and will apply them to the cooperative energy optimization problem.

Consider the same set of users $\mathcal{N}$ as the previous section. Any nonempty subset $S\subseteq \mathcal{N}$ is called a coalition between the members of $S$. A coalitional cost game between the members of $\mathcal{N}$ is a pair $(\mathcal{N},v)$ where $\mathcal{N}$ is the set of users and $v(S):2^\mathcal{N}\rightarrow \mathbb{R}^+$ is a set function representing the cost of each coalition $S\subseteq \mathcal{N}$. In a cost game, users prefer less cost and therefore may form coalitions to reduce the total cost. 
\begin{definition}
A cooperative game $(\mathcal{N},v)$ is called an LP game if its set cost function can be expressed as $v(S)=\underset{x}{\mathrm{min}}~u^Tx$ subject to linear constraints on $x$.
\end{definition}

We define the cost of each coalition $S$ as $v(S)=f^*_{coop,S}$. Here $f^*_{coop,S}$ is the optimal cost resulting from the coalition of users of set $S$. The coalition $\mathcal{N}$, i.e. the coalition of all users in $\mathcal{N}$ is called the grand coalition. We also define $\psi \in \mathbb{R}^N$ as the vector of cost distribution between all users in the grand coalition with each user's cost being $\psi_i$.

\begin{definition}
A game $(\mathcal{N},v)$ is said to be \textit{sub-additive} if given that $S \cap T = \phi$ it results that $v(S \cup T) \leq v(S) + v(T)$.
\end{definition}

\begin{proposition}
The above game $(\mathcal{N},v)$ with $v(S)=f_{coop,S}^{*}$ is sub-additive.
\end{proposition}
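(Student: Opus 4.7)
The plan is to prove sub-additivity by constructing an explicit feasible solution for the merged coalition $S\cup T$ out of the optimal solutions for $S$ and $T$, and showing that the cost of this constructed solution is bounded above by $v(S)+v(T)$. Since $v(S\cup T)$ is by definition the infimum of $f_{coop,S\cup T}$ over feasible schedules, this immediately yields the desired inequality.

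Concretely, first I would take the optimal storage dispatch profiles $\{e_n^{t,*}\}_{n\in S}$ and $\{e_n^{t,*}\}_{n\in T}$ achieving $v(S)$ and $v(T)$ in \emph{Optimization 2}, and use them unchanged as a candidate dispatch for the users of $S\cup T$. Because the per-user storage constraints in \eqref{eq:constraints_p2} are decoupled across $n$ (each user's $e_n^t$ and $c_n^t$ depend only on that user's own limits), the concatenated dispatch is feasible for the coalition $S\cup T$. The induced aggregate grid flow is $x_{S\cup T}^t = \sum_{n\in S\cup T}(d_n^t-e_n^{t,*}) = x_S^t + x_T^t$, where $x_S^t$ and $x_T^t$ denote the aggregate flows under the individual optima for $S$ and $T$.

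Next I would evaluate the cooperative cost \eqref{eq:cost_p2} for this candidate. The ToU term is linear, so
\begin{align}\nonumber
\sum_{t=1}^{T} p^t (x_S^t + x_T^t) = \sum_{t=1}^{T} p^t x_S^t + \sum_{t=1}^{T} p^t x_T^t.
\end{align}
For the demand-charge term, subadditivity of the maximum gives
\begin{align}\nonumber
\max_{t\in\mathcal{T}}(x_S^t + x_T^t) \leq \max_{t\in\mathcal{T}} x_S^t + \max_{t\in\mathcal{T}} x_T^t,
\end{align}
which is the one genuinely nontrivial step, though it is elementary. Multiplying by $\alpha\geq 0$ and combining with the ToU piece, the candidate cost is at most $v(S)+v(T)$.

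Finally, since this candidate is feasible for $S\cup T$ but not necessarily optimal, $v(S\cup T) \leq v(S)+v(T)$, establishing sub-additivity. The main (mild) obstacle is just being explicit that the storage dynamics decouple across users so that pasting two feasible dispatches together remains feasible; once that is pointed out, the rest follows from linearity of the ToU term and subadditivity of $\max$.
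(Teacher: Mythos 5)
Your proof is correct and follows essentially the same route as the paper: paste together the optimal schedules of the disjoint coalitions $S$ and $T$ as a feasible (not necessarily optimal) schedule for $S\cup T$, bound its cost using linearity of the ToU term and subadditivity of the max, and conclude by optimality of $v(S\cup T)$. Your added remarks on feasibility via the decoupled per-user storage constraints and on $\alpha\geq 0$ just make explicit what the paper leaves implicit.
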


\begin{proof}
Suppose $S$ and $T$ are two disjoint sets of users and $f_{coop,S}^*$ and $f_{coop,T}^*$ are their respective optimal costs. Now assume these two sets of users join to form the coalition $S \cup T$. If in the joint coalition, both sets of users $S$ and $T$ maintain their optimal schedules $x_S^*$ and $x_T^*$ from the optimal solution to \textit{optimization 1} (this is possible because $S$ and $T$ are disjoint), then
\begin{align}\nonumber
&f_{coop,S \cup T}=\sum_{t=1}^{T}p^t (x_S^t+x_T^t) + \alpha~\underset{t \in \mathcal{T}}{\max}~(x_S^t+x_T^t)\\ \nonumber
& \leq \sum_{t=1}^{T}p^t x_S^t + \alpha~\underset{t \in \mathcal{T}}{\max}~x_S^t+ \sum_{t=1}^{T}p^t x_T^t + \alpha~\underset{t \in \mathcal{T}}{\max}~x_T^t\\ \nonumber
&=f_{coop,S}^*+f_{coop,T}^*
\end{align}
Therefore, for the above joint schedule, we will always have $f_{coop,S \cup T} \leq f_{coop,S}^*+f_{coop,T}^*$. Since we always have $f_{coop,S \cup T}^* \leq f_{coop,S \cup T}$, it results that $f_{coop,S \cup T}^* \leq f_{coop,S}^*+f_{coop,T}^*$ or $v(S \cup T) \leq v(S)+v(T)$.
\end{proof}

This property implies that the total saving increases as more users participate in the game. It can also be intuitively verified that since the peak demand of the unified entity over a time horizon will always be equal or smaller than the sum of peak demands of all users, if all users maintain their optimal individual schedule without considering other users' schedules, the cost of the unified entity will be at worst equal to the sum of individual costs of all users. We have therefore found the answer to question \Romannum{1}. While the sub-additivity property discusses the total saving of all participating users, it does not discuss the distribution of savings between the users. The next section will address the distribution of the total optimal cost between the users.

\subsection{Distribution of saving from Cooperative Optimization}
 So far we have shown that the cost achieved as a result of cooperation between a group of users is at worst equal to the sum of costs of individual users if no cooperation took place. However, this does not necessarily lead to all users paying a lower cost than they did individually. In fact, whether or not users will be subject to a lower cost depends on the distribution of the total cost between the users. Two important concepts often considered while specifying cost distribution are stability and fairness. We will further characterize these two notions and study our cooperative game with regards to fairness and stability.
 
\textbf{Fairness.} A fair cost distribution $\psi \in \mathbb{R}^N$ divides the total cost of a cooperation based on different users' contributions in achieving that cost. The well-known characterization of fairness given be Shapley \cite{shapley1953value} that attributes certain desirable properties to a fair distribution is given as
\begin{theorem}
\cite{shapley1953value} Given the coalitional game $(\mathcal{N},v)$ with $v(S)=f_{coop,S}^{*}$, the unique cost distribution $\psi(\mathcal{N},v)$ that divides the entire saving of the grand coalition between the users and satisfies fairness axioms of \cite{shapley1953value} is given by
{\footnotesize
\begin{align}\nonumber
\psi_n&(\mathcal{N},v)=\nonumber
&\frac{1}{|\mathcal{N}|!}\sum_{S\subseteq \mathcal{N} \setminus {n}}{|S| !~(|\mathcal{N}|-|S|-1)!\big[v(S\cup {n})-v(S)\big]}
\end{align}}
\end{theorem}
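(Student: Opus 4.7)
The plan is to follow Shapley's classical two-part argument: first verify that the stated formula satisfies the four fairness axioms (efficiency, symmetry, null player, additivity), then establish uniqueness by expanding any game in a basis of elementary games where the axioms force the value.

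For the existence direction, I would check the axioms one by one. Efficiency, $\sum_{n} \psi_n = v(\mathcal{N})$, follows by a combinatorial counting argument: rewrite the formula as an average over all $|\mathcal{N}|!$ orderings of $\mathcal{N}$ of the marginal contribution of each player when she enters after the preceding players, and observe that each such ordering produces a telescoping sum equal to $v(\mathcal{N})$. Symmetry is immediate from the symmetric treatment of coalitions $S \subseteq \mathcal{N}\setminus\{n\}$. The null-player axiom follows since every bracket $v(S\cup\{n\})-v(S)$ vanishes when $n$ is null. Additivity is clear because the formula is linear in $v$.

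For the uniqueness direction, I would introduce the unanimity games $u_T$ defined by $u_T(S)=1$ if $T\subseteq S$ and $0$ otherwise, for each nonempty $T\subseteq\mathcal{N}$. I would then argue that $\{u_T\}_{T\neq\emptyset}$ forms a basis of the $(2^{|\mathcal{N}|}-1)$-dimensional vector space of set functions vanishing on the empty set, so that any $v$ admits a unique expansion $v=\sum_{T\neq\emptyset}\lambda_T\, u_T$ with coefficients (Harsanyi dividends) $\lambda_T=\sum_{R\subseteq T}(-1)^{|T|-|R|}v(R)$. On $u_T$, the symmetry and null-player axioms pin down the value: every player outside $T$ is null (so gets $0$), every player inside $T$ is symmetric (so they share $u_T(\mathcal{N})=1$ equally), giving $\psi_n(u_T)=1/|T|$ if $n\in T$ and $0$ otherwise. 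Additivity (in fact, scaling plus additivity) then forces $\psi_n(v)=\sum_{T\ni n}\lambda_T/|T|$ for any candidate value. Hence at most one value satisfies the axioms.

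To close the loop, I would substitute the Harsanyi coefficients back into this expression and verify that it rearranges into the stated formula; this is a bookkeeping exercise in swapping the order of summation over $S$ and over $T\supseteq\{n\}$ with $R\subseteq T$. The main obstacle, and the step I expect to be the most delicate, is precisely this combinatorial rearrangement and the verification that $\{u_T\}$ is a basis — the linear-algebraic content is straightforward but the factorial weights $|S|!(|\mathcal{N}|-|S|-1)!/|\mathcal{N}|!$ emerging from the ``random-order'' interpretation have to match the $1/|T|$ weights from the unanimity-game decomposition. Since our specific $v(S)=f^{*}_{coop,S}$ is just one instance in the space of set functions, no property of the underlying LP game beyond $v(\emptyset)=0$ enters the argument, so the general Shapley theorem applies directly.
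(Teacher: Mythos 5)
Your proposal is correct: it is the standard existence-plus-uniqueness argument for the Shapley value (random-order/telescoping verification of efficiency, symmetry, null player, additivity, then uniqueness via the unanimity-game basis and Harsanyi dividends). The paper itself does not prove this theorem at all --- it explicitly defers to the cited Shapley (1953) reference --- so your argument essentially reconstructs the classical proof from that source, and your closing observation that nothing about the specific LP game $v(S)=f^{*}_{coop,S}$ is used beyond $v(\emptyset)=0$ is exactly the right point. One small remark: no scaling axiom is needed in the uniqueness step, since for a scaled unanimity game $c\,u_T$ the symmetry, null-player, and efficiency axioms already force $\psi_n(c\,u_T)=c/|T|$ for $n\in T$ and $0$ otherwise, after which additivity alone (splitting positive and negative Harsanyi coefficients across the two sides of an identity of games) completes the argument.
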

where $|\mathcal{N}|$ and $|S|$ are the cardinality of $\mathcal{N}$ and $S$ respectively. The fairness axioms and proof can be found in \cite{shapley1953value}. The outcome of this theorem assigns portion $\psi_n$ of the total saving to each user $n$, also known as the Shapley value of user $n$. The above computation involves determining the average marginal contribution of user $n$ over all the different ways that the grand coalition can be formed from the zero coalition. Therefore one should solve \textit{optimization 2} for all possible subsets of the grand coalition and then compute the marginal contribution of each user using the above relation.

\textbf{Stability.} Although the Shapley distribution guarantees fair and efficient distribution of the saving between the users in accordance with the given axioms, there is still no guarantee that such saving is binding for all users. In fact, we now aim to address the question of stability of the grand coalition: under the Shapley saving distribution given above and despite the fact that the game is sub-additive, could there be any incentive for a user or a group of users to refrain from joining the grand coalition and form smaller coalitions among themselves? To answer this question, we next introduce the concept of \textit{core} in coalitional games.

\begin{definition}
A cost distribution vector $\psi \in \mathbb{R}^N$ is in the core of a coalitional game $(\mathcal{N},v)$ if and only if 
\begin{align}
\sum_{n\in \mathcal{N}}{\psi_n = v(\mathcal{N})} ~ \textrm{and} ~ \sum_{s\in S}{\psi_s \leq v(S)}, \forall S \subset \mathcal{N}
\label{eq:core}
\end{align}
\label{definition:core}
\end{definition}

In other words, in order for a coalition to be stable, the sum of savings of any subset of users should be greater or equal to the saving that those users would have obtained if that subset actually formed a sub-coalition. It is now reasonable to assume that rational users participating in a coalitional game would want to form the grand coalition only if the saving distribution vector is drawn from the core. We can henceforth relate the concept of core with the stability of a coalitional game. This concept is similar to the concept of Nash equilibrium in cooperative games. The difference here is that instead of investigating whether a single user can benefit by deviating from an action, we turn our attention to possible deviation of a group of users in order to form their own coalition. It can be shown \cite{osborne1994course} that the core of a cooperative game may be empty. If a game has an empty core, no saving distribution can possibly guarantee the stability of the grand coalition. In such case, a player or group of players may opt for smaller coalitions to increase their savings. The following theorem will examine non-emptiness of the core for the game considered in this work.

\begin{corollary}
The core of the game $(\mathcal{N},v)$ with $v(S)=f_{coop,S}^{*}$ is non-empty.
\label{corollary:nonempty_core}
\end{corollary}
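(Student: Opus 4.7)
The plan is to invoke the Bondareva--Shapley theorem, which states that the core of a cost game is non-empty if and only if the game is balanced: for every family of nonnegative weights $\{\lambda_S\}_{\emptyset \neq S \subseteq \mathcal{N}}$ satisfying $\sum_{S \ni n} \lambda_S = 1$ for all $n \in \mathcal{N}$, the inequality $v(\mathcal{N}) \leq \sum_S \lambda_S\, v(S)$ must hold. A natural alternative is to observe that, after replacing $\alpha\max_t x^t$ by $\alpha z$ with the inequalities $z \geq x^t$ for every $t \in \mathcal{T}$, Optimization 2 becomes a linear program in $(x, e, z)$, so $(\mathcal{N}, v)$ is an LP game in the sense of Definition 1; the conclusion would then follow from Owen's classical theorem that every LP game has non-empty core. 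I would nevertheless carry out the Bondareva--Shapley verification directly, since it recovers Owen's argument in this concrete setting.

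Fix a balanced family $\{\lambda_S\}$ and, for each coalition $S$, let $(x_S^*,\{e^{*,t}_{n,S}\}_{n \in S})$ denote an optimizer of Optimization 2 restricted to $S$. I would construct a candidate grand-coalition schedule by averaging: for every player $n \in \mathcal{N}$ and time $t \in \mathcal{T}$, set $\tilde{e}_n^t := \sum_{S \ni n} \lambda_S\, e^{*,t}_{n,S}$ and let $\tilde{c}_n^t$ be the induced charge trajectory. Because each per-player plan $e^*_{n,S}$ satisfies the affine storage dynamics and box constraints, and balancedness gives $\sum_{S \ni n} \lambda_S = 1$, the pair $(\tilde{e}_n, \tilde{c}_n)$ is a convex combination of feasible per-player plans and is therefore itself feasible for player $n$. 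A short calculation using balancedness yields the induced grid draw
\begin{align*}
\tilde{x}^t \;=\; \sum_{n \in \mathcal{N}} d_n^t - \sum_{n \in \mathcal{N}} \tilde{e}_n^t \;=\; \sum_{S} \lambda_S\, x_S^{*,t}.
\end{align*}

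The final step exploits convexity of $f_{coop}$: its ToU part is linear in $x$, while $\alpha\max_t x^t$ is a maximum of linear functions. Convexity of the $\max$ operator gives $\max_t \tilde{x}^t \leq \sum_S \lambda_S \max_t x_S^{*,t}$, so
\begin{align*}
v(\mathcal{N}) \;\leq\; f_{coop}(\tilde{x}) \;\leq\; \sum_S \lambda_S\, f_{coop}(x_S^*) \;=\; \sum_S \lambda_S\, v(S),
\end{align*}
which is the desired balancedness inequality; Bondareva--Shapley then delivers a non-empty core. The step I expect to be most delicate is verifying feasibility of the averaged trajectory against every constraint of Optimization 2---the storage dynamics, the box constraints on $e_n$ and $c_n$, and the periodic boundary condition $c_n^{t_0} = c_n^{t_{\text{end}}}$---each of which must be checked term by term and relies on linearity of the constraint combined with the balancedness identity $\sum_{S \ni n} \lambda_S = 1$. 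The whole argument hinges on convexity of $f_{coop}$, so the particular shape of the peak-demand term is essential: replacing the $\max$ by anything non-convex would break the key inequality above.
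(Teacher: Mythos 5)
Your argument is correct, but it takes a different route from the paper. The paper's proof is essentially a two-line citation: after the epigraph reformulation (exactly the $\alpha z$, $z\geq x^t$ trick you mention), each $v(S)$ is the optimal value of a linear program, so $(\mathcal{N},v)$ is an LP game in the sense of Definition 1, and the non-emptiness of the core then follows from the classical result on LP games cited from Curiel. You note this route but instead verify the Bondareva--Shapley balancedness condition directly: averaging the coalition-optimal storage schedules with the balancing weights $\lambda_S$, using $\sum_{S\ni n}\lambda_S=1$ to get feasibility of each player's averaged plan (convexity of the affine dynamics, box, and periodicity constraints) and to get $\tilde{x}^t=\sum_S\lambda_S x_S^{*,t}$, and then bounding the cost. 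What your route buys is a self-contained argument (modulo Bondareva--Shapley itself) that never needs the LP reformulation and in fact only uses that the feasible sets are convex and the cost is sublinear, so it is somewhat more general and also establishes total balancedness, since the same argument applies verbatim to every subgame; what the paper's route buys is brevity and, via the duality-based proof of the LP-game theorem, an explicit core allocation from optimal dual prices rather than a pure existence statement. One small precision issue in your write-up: since the balancing weights need not sum to one over all coalitions, the inequality $\max_t\tilde{x}^t\leq\sum_S\lambda_S\max_t x_S^{*,t}$ is not an application of convexity to a convex combination; it follows from sublinearity of the max (subadditivity plus positive homogeneity with $\lambda_S\geq 0$), and likewise the final cost bound uses linearity of the ToU term plus this sublinearity. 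This does not affect correctness, but the justification should be stated that way.
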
 

\begin{proof}
Since $v(S)=f_{coop,S}^{*}=\underset{x}{\mathrm{minimize}}~f_{coop,S}(x)$ can be expressed as a linear program with the linear cost functions of \textit{optimization 2}, this game is an LP game. It can be proved \cite{curiel2013cooperative} that an LP game has a nonempty core and therefore the core of the above game is non-empty.
\end{proof}

It can be shown that if the core of a coalitional game exists, it may not be unique \cite{osborne1994course}. For certain class of games known as submodular games, if the cost is distributed according to Shapley distribution, then the grand coalition will be stable.
\begin{definition}
A game $(\mathcal{N},v)$ is said to be \textit{submodular} (concave) if $v(S \cup T) + v(S \cap T) \leq v(S) + v(T),~\forall S,T \subseteq \mathcal{N}$ or equivalently $v(S\cup \{i\})-v(S) \geq v(T\cup \{i\})-v(T), \forall S\subseteq T \subseteq \mathcal{N} \backslash \{i\}, \forall i \in \mathcal{N}$.
\label{definition:submodular}
\end{definition}
In the following, we will investigate the submodularity of the considered collaborative cost minimization game.
\begin{proposition}
The game $(N,v)$ with $v(S)=f_{coop,S}^{*}$ is not submodular.
\end{proposition}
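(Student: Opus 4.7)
The plan is to refute submodularity by constructing an explicit counterexample in which the marginal cost of adding some user $i$ to a larger coalition $T$ strictly exceeds its marginal cost of adding $i$ to a smaller coalition $S \subseteq T$. The structural reason to expect such a counterexample is that, unlike the ToU term, which is linear in $x^t$ and therefore additive across users, the demand-charge term $\alpha\,\max_t x^t$ depends on which time index attains the coalition's peak, and this argmax generally shifts as the coalition grows. If user $i$'s own peak is aligned with the aggregate peak time of $T$ but misaligned with the aggregate peak time of $S$, then adding $i$ to $S$ is absorbed into the existing off-peak slack, while adding $i$ to $T$ piles directly onto $T$'s peak, giving a strictly larger marginal cost.

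First, I would reduce the instance to its essentials: take $|\mathcal{N}|=3$, just two time slots, set $p^t\equiv 0$ so that only the peak term contributes to the cost, and assume $C_n=0$ for all users so that \textit{Optimization 2} admits only the trivial schedule $x^t = \sum_n d_n^t$. Under these choices $v(S)=\alpha\,\max_t \sum_{n\in S} d_n^t$ and every coalition value is obtained by direct evaluation. Second, I would design three demand profiles so that user~2 alone peaks at time $t_2$; users $\{1,2\}$ jointly peak at time $t_1$ (because user~1's load dominates there); and user~3's load is concentrated at $t_1$. The numbers would be chosen so that user~3's contribution at $t_1$ does not push the peak of $\{2,3\}$ above user~2's own peak at $t_2$, yielding $v(\{2,3\})=v(\{2\})$, while in contrast $v(\{1,2,3\})$ exceeds $v(\{1,2\})$ by exactly the amount of user~3's demand at $t_1$. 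Third, I would substitute these values into the defining inequality of Definition~\ref{definition:submodular} with $S=\{2\}$, $T=\{1,2\}$, $i=3$, and exhibit a strict violation $v(S\cup\{i\})-v(S) < v(T\cup\{i\})-v(T)$, which suffices to disprove the universally quantified submodularity condition.

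The main obstacle is ensuring that the cooperative program itself cannot undo the peak-misalignment effect by reshaping schedules through storage. Because \textit{Optimization 2} minimizes over all admissible $x$, one must rule out the possibility that a clever re-dispatch of batteries flattens the aggregate peak and restores the submodularity inequality. Setting $C_n=0$ for every user eliminates this degree of freedom entirely, collapsing the optimization to direct evaluation of $\max_t \sum_{n\in S} d_n^t$ and making the counterexample watertight. A secondary but routine verification is that the chosen demand vectors are nonnegative and consistent with the storage constraints in (3), which is immediate under the zero-storage choice. Together these steps turn the structural intuition about shifting peak times into a concrete, verifiable refutation of the submodular inequality.
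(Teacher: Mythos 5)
Your proposal is correct and follows essentially the same route as the paper: a three-user counterexample with no storage and $p^t\equiv 0$, in which the demand profiles are arranged so that adding user $3$ to a larger coalition has a strictly greater marginal peak-charge cost than adding it to a smaller one, violating the second characterization in Definition~\ref{definition:submodular} (the paper uses $S=\{1\}$, $T=\{1,2\}$, $i=3$ with explicit values $1\ngeq 2$, while you take $S=\{2\}$ and leave the numbers to be filled in, which is routine). Your explicit observation that setting $C_n=0$ collapses \textit{Optimization~2} to direct evaluation of $\max_t\sum_{n\in S}d_n^t$ makes precise a point the paper's proof relies on implicitly.
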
 

\begin{figure}[thpb]
    \centering
    \includegraphics[width=\columnwidth]{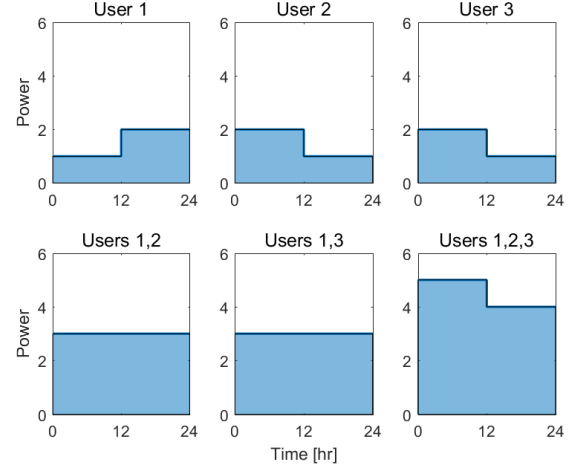}
    \caption{Users demand and consumption patterns without storage capacity. Top: Demand profiles for each of the users 1, 2, and 3. Bottom: Resulting demand profiles if coalitions form between different subsets of users.}
    \label{fig:nonconcavity}
\end{figure}

\begin{proof}
We will show that this game violates the submodularity property by a simplified example of a three-user cooperative game. Assume that three users none of which have storage capacity participate in a cooperative power sharing game. Also for simplicity assume that $p(t)=0,~ \forall t \in \mathcal{T}$, so only demand charge is applicable. Suppose the demand of each of the users follows the profiles of Fig. \ref{fig:nonconcavity}. Taking $S= \{1 \} $ and $T= \{1,2 \} $ as two possible coalitions between the users and $i=3$, we see that such coalitions satisfy $S \subseteq T \subseteq \mathcal{N}\backslash \{i\}$. Now, it can be seen from Fig. \ref{fig:nonconcavity} that
\begin{align}\nonumber
& v(S) = 2~,
v(S\cup \{i\})=3~,
v(T) = 3~,
v(T\cup \{i\})=5~ \\ \nonumber
& v(S\cup \{i\})-v(S)  = 1 \ngeq 2 = v(T\cup \{i\})-v(T)
\end{align}

Therefore the above game is not sub-modular according to definition \ref{definition:submodular}.
\end{proof}

It can further be shown that if the cost function contains onlly the ToU term, the game would satisfy the submodularity condition. The proof is eliminated due to space constraints. The lack of submodularity is therefore due to the existence of peak demand cost term. The lack of submodularity implies that the resulting Shapley value may fail to fall within the core of the game \cite{curiel2013cooperative}, as will be shown in the numerical case study of the next section. However, according to corollary \ref{corollary:nonempty_core}, the core of the game is non-empty in both cases and therefore, one should be able to find another cost distribution that falls within the core. In fact, according to definition \ref{definition:core}, all distributions $\psi_n$ that satisfy the inequalities (\ref{eq:core}) constitute the core.

\section{Fair and Stable Cost Distribution Algorithm}
Non-emptiness and non-uniqueness of the core imply that multiple payoff distributions might exist that are within the core. In the ideal case, the Shapley distribution is chosen over other possible distributions due to its fairness properties. However, if Shapley distribution is not within the core (is not stable), we would want to have another notion of fairness that helps us pick one distribution among all stable distributions. To this end, we would define our notion of fairness as the difference between the highest and lowest percentage cost saving among all users for a given distribution vector. Using this fairness index, the stable and fair payoff distribution problem is formulated as

\begin{align}\nonumber
\underset{\psi}{\mathrm{minimize}}~~ & \Delta \\ \nonumber
\textrm{subject to}~& \sum_{n\in \mathcal{N}}{\psi_n = v(N)} \\ 
& \sum_{s\in S}{\psi_s \leq v(S)},~\forall S \subset \mathcal{N} \label{eq:fair_dist_LP}
\\ \nonumber
&  \Lambda_{min} \leq \frac{v(\{n\})-\psi_n}{v(\{n\})}  \leq \Lambda_{max},~\forall n\in \mathcal{N}\\ \nonumber
& \Delta = \Lambda_{max} - \Lambda_{min}
\end{align}
In this LP, the first two constraints enforce stability while the next two constraints formulate a measure of fairness as the difference between maximum and minimum percentage cost saving among all users. The objective is then to minimize this difference subject to stability constraints. Since the game is proved to have a non-empty core, the solution of this LP will always exist and will fall within the core of the game. Based on this program, we formalize the following cooperative optimization and cost distribution algorithm between a group of users. It should be noted that since the resulting cost distribution will be drawn from the core, it is in the interest of all users regardless of their demand pattern and storage capacity to participate in this game.

\begin{algorithm}
\caption{Cooperative optimization and cost sharing}\label{algorithm}
\begin{algorithmic}[1]
\State Collect load data for all users
\State Solve the cooperative optimization (\textit{optimization 2}) with cost function (\ref{eq:cost_p2}) subject to the constraints (\ref{eq:constraints_p2})
\State Distribute the total cost between the users according to Shapley distribution
\State Check if the Shapley distribution falls within the core by checking conditions (\ref{eq:core})
\State \textbf{If} Shapley distribution is within the core \textbf{then}
\State \quad Distribute the cost according to Shapley distribution
\State \textbf{else}
\State \quad Compute a fair and stable distribution by solving the linear program (\ref{eq:fair_dist_LP})
\State \textbf{End if}
\end{algorithmic}
\end{algorithm}

By following this algorithm, a stable payoff distribution is computed so that all users will have an incentive to join and form the grand coalition. This cost distribution algorithm provides the answer to question \Romannum{2}.

\textbf{Remark.} The complexity of checking condition (\ref{eq:core}) grows exponentially with $|\mathcal{N}|$. Therefore for very large number of participating microgrids, the Shapley computation steps and its stability checks may be skipped. The proposed optimization (\ref{eq:fair_dist_LP}) can directly be used to obtain a stable distribution in such cases to ensure scalability of the cost distribution.

\section{Case Study}
Consider three users with load demand profiles shown in Fig.\ref{fig:case2_power} left and storage capacities of respectively $500,300,700~kWh$. The load profiles represent predictions for two industrial and one commercial facility. Suppose these consumers are under the same ToU and peak demand charge electric rate plan. Normalized ToU price tariffs are shown in Table \ref{tab:price}. Also, the normalized peak demand charge coefficient is equal to $10$. 

\begin{table}[thpb]
\caption{ToU unit prices ($p^t$)}
\label{tab:price}
\begin{center}
\begin{tabular}{ |c||c|c|c|c| } 
 \hline
 Time &9am-12pm&12pm-6pm&6pm-9pm&9pm-9am\\ \hline
 $p^t$  &1.5&2&1.5&1\\ \hline
\end{tabular}
\end{center}
\end{table}

\begin{figure}[thpb]
    \centering
    \includegraphics[width=\columnwidth]{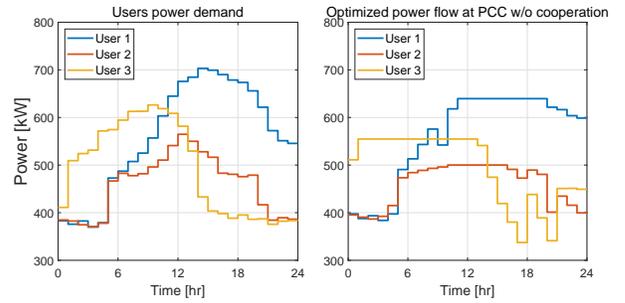}
    \caption{Left: Demand profiles for three industrial/commercial users. Right: Results of individual optimization (solution of  \textit{optimization 1})}    \label{fig:case2_power}
\end{figure}
Solving the convex optimization (\textit{optimization 1}) individually for each user results in power flow profiles of Fig. \ref{fig:case2_power}, right. Next, solving the cooperative scheduling problem (\textit{optimization 2}) for these users, the total cost of the system reduces from 67432 to 66174 as a result of joint scheduling. The optimal cost of other possible sub-coalitions are also shown in Table \ref{tab:case_2_cost}. Fig. \ref{fig:case2_total_power} compares total power flow of the three microgrids under individual versus joint scheduling. The flattened utility power profile under cooperative scheduling explains the lower cost achieved under the cooperation. Next, Shapley values are computed to obtain the share of each user in the total cost. It is observed that the following condition is violated
\begin{align*}
v(13)=45851 \ngeq 45873 = \psi_1 + \psi_3
\end{align*}
and therefore the Shapley distribution is not within the core of the game. Following algorithm 1, we then compute a fair cost distribution from the core by solving (\ref{eq:fair_dist_LP}). The resulting stable distribution is compared with other distributions in Fig. \ref{fig:case2_cost}. Satisfaction of the following inequalities verifies the presence of this distribution within the core and hence stability of the game.
\begin{align*}
&v(1) = 25522 \geq 24881 = \psi_1\\
&v(2) = 20399 \geq 20324 = \psi_2\\
&v(3) = 21510 \geq 20970 = \psi_3\\
&v(12) = 45806 \geq 45205 = \psi_1+\psi_2\\
&v(13) = 45851 \geq 45851 = \psi_1+\psi_3\\
&v(23) = 41587 \geq 41294 = \psi_2+\psi_3\\
&v(123) = 66174 = 66174 = \psi_1+\psi_2+\psi_3
\end{align*}

\begin{figure}[thpb]
    \centering
    \includegraphics[width=0.8\columnwidth,height=3.5cm]{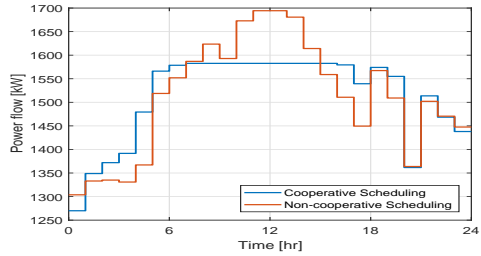}
    \caption{Result of cooperative optimization between all users (blue) vs. sum of all users' consumption under individual optimization (red).}
    \label{fig:case2_total_power}
\end{figure}

\begin{figure}[thpb]
    \centering
    \includegraphics[width=0.8\columnwidth]{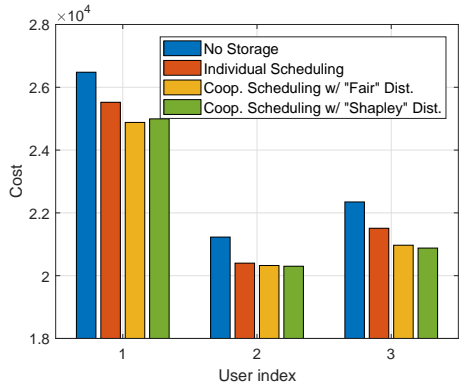}
    \caption{Cost of each user under individual optimization vs under different distributions of the coalitional optimization. Notice the reduction in cost if each user employs storage individually and also if users conduct cooperative scheduling. 
    }
    \label{fig:case2_cost}
\end{figure}

\section{Conclusion}
For a group of microgrids participating in energy plans involving peak demand charge, a cooperative energy scheduling algorithm is proposed that reduces the total energy cost as well as individual costs of each microgrid. Stability of the coalition is investigated by studying properties of the considered cost structure. To motivate participation among microgrids, a cost allocation algorithm is proposed that maximizes some measure of cost distribution fairness while satisfying stability properties of the distribution.

\begin{table}[thpb]
\caption{Optimal cost of \textit{optimization 2} under different possible sub-coalitions of the grand coalition}
\label{tab:case_2_cost}
\begin{center}
\begin{tabular}{ |c|c|c|c|c| } 
 \hline
 Coalition&\{1\}&\{2\}&\{3\}& \\ \hline 
 Cost& 25522  & 20399 & 21510 & \\ \hline \hline
Coalition&\{1,2\}&\{1,3\}&\{2,3\}&\{1,2,3\} \\ \hline
Cost& 45806 & 45851 & 41587 & 66174 \\ \hline
\end{tabular}
\end{center}
\end{table}


\bibliographystyle{IEEEtran}
\bibliography{IEEEabrv,mybib}

\end{document}